\let\emph\relax \DeclareTextFontCommand{\emph}{\bfseries\em}
 \renewenvironment{proof}{{\noindent \itshape Proof.}}{\qed\vspace{\baselineskip}}
\newcommand{\word}[1]{\ensuremath{\boldsymbol{#1}}}
\newcommand{\bfc}{\word{c}}
\newcommand{\bfe}{\word{e}}
\newcommand{\bfg}{\word{g}}
\newcommand{\bfk}{\word{k}}
\newcommand{\bfx}{\word{x}}
\newcommand{\bfy}{\word{y}}
\newcommand{\bfz}{\word{z}}
\newcommand{\mat}[1]{\ensuremath{\boldsymbol{#1}}}
\newcommand{\bfC}{\mat{C}}
\newcommand{\bfE}{\mat{E}}
\newcommand{\bfG}{\mat{G}}
\newcommand{\bfX}{\mat{X}}
\newcommand{\bfY}{\mat{Y}}
\newcommand{\bfGamma}{\mat{\Gamma}}
\renewcommand{\ker}{{\rm Ker}}
\newcommand{\N}{\mathbb{N}}
\newcommand{\F}[1]{\mathbb{F}_{#1}}
\newcommand{\K}{\mathbb{K}}
\newcommand{\Fq}{\mathbb{F}_q}
\newcommand{\Fqm}{\mathbb{F}_{q^m}}
\newcommand{\Fqmu}{\mathbb{F}_{q^{mu}}}
\newcommand{\code}[1]{\mathscr{#1}}
\newcommand{\BB}{\code{B}}
\newcommand{\CC}{\code{C}}
\newcommand{\Gab}[2]{\code{G}_{#1}(#2)}
\newcommand{\IGab}[3]{I\code{G}_{#1, #2}(#3)}
\newcommand{\ext}[2]{\textrm{Ext}_{#2}(#1)}
\newcommand{\Supp}{\textrm{Supp}}
\newcommand{\Rsupp}{\textrm{RowSupp}}
\newcommand{\Mspace}[2]{\mathcal{M}_{#1}(#2)}
\newcommand{\rk}[1][q]{\mathbf{rank}_{#1}}
\newcommand{\qpoly}{\mathcal{L}}
\newcommand\smvee{\hbox{$\scriptscriptstyle\vee$}}
\newcommand{\adj}[1]{{#1}^{\raise0.9ex\smvee}}
\newcommand{\qdeg}{\deg_q}
\newcommand{\eqdef}{\stackrel{\textrm{def}}{=}}
\renewcommand{\span}{\mathbf{Span}}
\renewcommand{\leq}{\leqslant}
\renewcommand{\le}{\leqslant}
\newcommand{\map}[4]{\left\{
    \begin{array}{ccc}
      #1 & \longrightarrow & #2\\
      #3 & \longmapsto & #4
    \end{array}\right.
}
\newcommand{\ie}{{\em i.e. }}
\newcommand{\eg}{{\em e.g. }}
\newcommand{\LIGA}{{\sc Liga}}
 \author{Maxime Bombar\inst{1,2}\and Alain Couvreur \inst{2,1}}
\institute{LIX, CNRS UMR 7161, \'Ecole Polytechnique,\\
  Institut Polytechnique de Paris,\\
1 rue Honor\'e d'Estienne d'Orves\\
91120 {\sc Palaiseau Cedex} \and
Inria\\
\email{\{maxime.bombar, alain.couvreur\}@inria.fr}
}
\begin{document}

\title{Right-hand side decoding of Gabidulin codes and applications}

\maketitle

\begin{abstract}
  We discuss the decoding of Gabidulin and interleaved Gabi\-dulin
  codes. We give the full presentation of a decoding algorithm for
  Gabidulin codes, which as Loidreau's seminal algorithm consists in
  localizing errors in the spirit of Berlekamp--Welch algorithm for
  Reed--Solomon codes. On the other hand this algorithm consists in
  acting on codewords on the right while Loidreau's algorithm considers
  an action on the left. This {\em right--hand side} decoder was
  already introduced by the authors in a previous work for
  cryptanalytic applications. We give a generalised version which
  applies to non--full length Gabidulin codes. Finally, we
  show that this algorithm turns out to provide a very clear and
  natural approach for the decoding of interleaved Gabidulin codes.
\keywords{Gabidulin codes \and Decoding \and Interleaved
    codes}
\end{abstract}
 \section*{Introduction}
Rank metric codes have been introduced in \cite{G85} by Gabidulin and
have found applications in cryptography
\cite{GPT91,FL05,AABBBDGZ17,LLP20,RPW21,AGHRZ17,ABGHZ18}, in network
communications \cite{SK11} or in data storage \cite{RKSV14}. Compared
to the Hamming world, only few families of codes endowed with the rank
metric are known to have efficient decoding algorithms. Gabidulin
codes are the rank-metric analogue of Reed-Solomon codes and are
somehow optimal because they reach the rank-metric Singleton bound and
come with efficient decoders up to the unique decoding radius
$\frac{n-k}{2}$. However, there exist no known decoder beyond this
bound, even probabilistic ones. More, there exist families of
Gabidulin codes that cannot be list decoded in polynomial time
\cite{RW15}. Nonetheless, if we consider $u$ codewords in parallel, it
is possible to overcome this restriction and decode up to
$\frac{u}{u+1}(n-k)$ rank errors with overwhelming probability.

In the present article, we recall a right-hand side decoder for
Gabidulin codes recently introduced in \cite{BC21} for cryptanalytic
applications. While the aforementioned reference restricted to the
case of full length Gabidulin codes ({\em i.e. n=m}), in the present
article we extend it to handle Gabidulin codes of any length
$n \leq m$.

Next, we show how this decoder can be used to provide a simple decoder
for $u$-interleaved Gabidulin codes. We claim that the use of this
algorithm provides a much simpler point of view on the decoding of
interleaved Gabidulin codes because it only involves solving an
overdetermined linear system. In particular, this algorithm is very
similar to the decoder for Interleaved Reed-Solomon codes presented in
\cite{BKY03}, and its decoding radius is given by comparing the number
of equations to the number of unknowns. Moreover, it permits to clarify
a cryptographic attack based on the decoding of interleaved Gabidulin
codes and provides a very simple explanation of the decoding failures.

\section{Notations and Prerequisites}

In this article, $q$ is a prime power and $k, m, n, u$ are non
negative integers. $\Fq$ denotes the finite field with $q$ elements,
and for a non negative integer $\ell$, $\F{q^{\ell}}$ is the algebraic
extension of $\Fq$ of degree $\ell$. The space of $m\times n$ matrices
with entries in a field $\K$ is denoted by $\Mspace{m\times n}{\K}$.
Lower case bold face letters such as $\bfx$ represent vectors, and
upper case bold face letters such as $\bfX$ denote matrices.

\subsection{Rank metric codes}

Given a vector $\bfx\in\Fqm^{n}$, the \textit{column support} (or
\textit{support}) of $\bfx$, denoted $\Supp(\bfx)$ is the $\Fq$-vector subspace
of $\Fqm$ spanned by the entries of $\bfx$:
\[ \Supp(\bfx) = \span_{\Fq}\{x_{1}, \dots, x_{n}\}.
\] We consider also another notion of support, namely the \textit{row}
support.  Let $\BB$ be a basis of the extension field $\Fqm/\Fq$. Then
we define the extension of $\bfx$ with respect to $\BB$ as the matrix
$\ext{\bfx}{\BB}\in\Mspace{m \times n}{\Fq}$ whose columns are the
decompositions of the entries of $\bfx$ in the basis $\BB$.  The row space of
$\ext{\bfx}{\BB}$ is called the \textit{row} support of $\bfx$, \ie
\[ \Rsupp(\bfx) \eqdef \{\bfy\ext{\bfx}{\BB} \mid \bfy \in \Fq^m\} \subset \Fq^{n}.
\]
It is a vector subspace of $\Fq^{n}$. Notice that the above definition
does not depend on the choice of the basis $\BB$. The \textit{rank
  weight} $\rk(\bfx)$ (or \textit{rank}) of $\bfx$ is the rank of
$\ext{\bfx}{\BB}$ with respect to any basis $\BB$. The \textit{rank
  distance} or \textit{distance} between two vectors
$\bfx, \bfy\in\Fqm^{n}$ is defined as
\(d(\bfx, \bfy) = \rk(\bfx-\bfy).
\) In this article, we consider only $\Fqm$-linear codes: a code $\code{C}$ of
length $n$ and dimension $k$ is an $\Fqm$-subspace of $\Fqm^{n}$ whose dimension, as an
$\Fqm$-vector space, is $k$. The {\em minimum distance} of $\code{C}$ is defined
as
\[ d_{min}(\code{C}) = \min\{\rk(\bfc) \mid \bfc \in \code{C}, \bfc\neq 0\}.
\]

\subsection{Gabidulin codes and $q$-polynomials}\label{ss:Gab+qpoly}

A $q$--polynomial is a polynomial of the form
\[
  P(X) = p_{0}X + p_{1}X^{q}+ \dots + p_{r}X^{q^{r}}, \quad p_{r} \neq 0.
\]
The integer $r$ is called the {\em $q$-degree} of $P$ and is denoted by
$\deg_{q}(P)$. A $q$-polynomial $P$ induces an $\Fq$-linear map $\Fqm\to\Fqm$
whose kernel has dimension bounded by $\qdeg (P)$. The {\em rank} of
a $q$-polynomial will denote the rank of the induced linear map. Let $\qpoly$ be
the space of $q$-polynomials and given a positive integer $k<m$, we denote by
$\qpoly_{<k}$ (resp. $\qpoly_{\le k}$) the space of $q$-polynomials of
$q$-degree less than (resp. less than or equal to) $k$. Equipped with the
addition and the composition law, $\qpoly$ is a non commutative ring which is
left and right euclidean \cite{G96b} and the
two-sided ideal $(X^{q^{m}}-X)$ is the kernel of the canonical map
\[ \qpoly \rightarrow \textrm{Hom}_{\Fq}(\Fqm, \Fqm),
\] inducing an isomorphism
\[ \qpoly/(X^{q^{m}}-X) \simeq \textrm{Hom}_{\Fq}(\Fqm, \Fqm).
\] Let $n\le m$, $k\le n$ and $\bfg = (g_{1}, \dots, g_{n})\in\Fqm^{n}$ whose
entries are linearly independent. The {\em Gabidulin code} of dimension $k$ and
evaluation vector $\bfg$ is defined as
\[ \Gab{k}{\bfg} \eqdef \left\{ (P(g_{1}), \dots, P(g_{n}))\mid P\in \qpoly_{<k}\right\} \subset \Fqm^{n}.
\] Notice that the canonical map
\[ \map{\qpoly_{< k}}{\Gab{k}{\bfg}}{P}{(P(g_1), \dots, P(g_n)).}
\] is rank preserving, which allows to identify $\Gab{k}{\bfg}$ with
$\qpoly_{<k}$. It is well known that Gabidulin codes are {\em Maximum Rank
Distance} (MRD), which means that they reach the rank metric analogue of the
{\em Singleton} bound
\[ d_{min}(\qpoly_{<k}) = n-k+1.
\]
Moreover, Gabidulin codes come with efficient decoders able to correct
errors up to the unique decoding radius $\frac{n-k}{2}$. However,
contrary to Reed-Solomon codes, there exists families of Gabidulin
codes that {\em cannot} be list decoded in polynomial time beyond this
bound \cite{RW15}.

Following \cite[\S~1]{BC21}, to any class $P \in \qpoly / (X^{q^m}-X)$,
corresponds an adjoint $P^\vee$ defined as follows:
\[
  \text{for}\quad P(X) = \sum_{i=0}^{m-1} p_iX^{q^i}\qquad \text{and} \qquad
  P^\vee(X) = \sum_{i=0}^{m-1} p^{q^{m-i}}_i X^{q^{m-i}}.
\]
This corresponds to the usual notion of the adjoint endomorphism
with respect to the non degenerate bilinear form on $\Fqm$:
\(
  (x,y) \longmapsto \text{Tr}_{\Fqm/\Fq}(xy).
\)

 \section{Right hand side decoding algorithm}

Let $k<n\le m,~\bfg = (g_{1}, \dots, g_{n})\in\Fqm^{n}$ whose entries
are linearly independent, and let $\CC \eqdef \Gab{k}{\bfg}$. In
\cite{L06a}, Loidreau introduced a Berlekamp-Welch-like decoding
algorithm, which can decode up to $\lfloor\frac{n-k}{2}\rfloor$ rank
errors. This algorithm works {\em on the left} and can be applied to
Gabidulin codes of arbitrary length $n\le m$. Indeed, representing
vectors in $\Fqm^n$ as matrices, the left--hand side decoding consists
in acting on matrices on the left, which is possible whatever the
length $n$ (which corresponds to the number of columns of the
corresponding matrices).

In \cite{BC21}, the authors proposed to work on the right-hand side
instead, which was useful to provide an attack on the code-based
encryption scheme {\sc Ramesses} \cite{LLP20}. However, their decoding
algorithm was only considered in the case where $n=m$ (which was
enough for cryptanalysis). The right-hand side algorithm is not
completely straightforward when $n < m$. In particular, one can no
longer transpose the matrices representing codewords.

In the present section, we recall a self-contained presentation of the
right-hand side version, and prove that restricting $n$ to be maximal is
unnecessary. In particular, we show how the right-hand side decoding algorithm
applied to any $[n, k]$ Gabidulin code can correct up to
$\left\lfloor\frac{n-k}{2}\right\rfloor$ errors.

\subsection{$n=m$}

Suppose we receive a word $\bfy = \bfc + \bfe \in\Fqm^{n}$ where $\bfc\in\CC\eqdef \Gab{k}{\bfg}$
and $\bfe\in\Fqm^{n}$ has rank $t\le \frac{n-k}{2}$. By linear interpolation, there exist three $q$-polynomials $C\in\qpoly_{<k}$
and $Y, E\in\qpoly_{<m}$ such that
\[ Y = C+E,
\] and $Y$ is known to the receiver (See for instance \cite[Chapter
3]{W13}). Since $n=m$, $(g_{1}, \dots, g_{n})$ forms a basis of the
extension field $\Fqm/\Fq$. Therefore, $\rk(E) = \rk(\bfe) = t$. The
core of the algorithm relies in the following proposition:

\begin{proposition}\label{prop:right_annihilator} Let $E$ be a $q$-polynomial of rank $t$. There exists a
unique monic $q$-polynomial $\Lambda$ such that $\deg_{q}(\Lambda)\le t$ and $E\circ \Lambda = 0$
modulo $(X^{q^{m}} - X)$.
\end{proposition}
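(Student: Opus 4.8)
The plan is to work inside the quotient ring $\qpoly/(X^{q^m}-X) \simeq \textrm{Hom}_{\Fq}(\Fqm,\Fqm)$ and translate the statement into a purely linear-algebraic one about the endomorphism $\varphi$ induced by $E$. Saying $E\circ\Lambda \equiv 0 \pmod{X^{q^m}-X}$ means exactly that the endomorphism induced by $\Lambda$ has image contained in $\ker\varphi$. Since $\rk(E)=t$, the kernel $\ker\varphi$ is an $\Fq$-subspace of $\Fqm$ of dimension exactly $m-t$. So I want: a unique monic $q$-polynomial $\Lambda$ of $q$-degree $\le t$ whose induced map has image inside a prescribed $\Fq$-subspace $V=\ker\varphi$ of codimension $t$.

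First I would recall the classical fact (an $\Fq$-analogue of the fact that a polynomial with prescribed roots is divisible by the product of the corresponding linear factors): for any $\Fq$-subspace $V\subseteq\Fqm$ of dimension $d$, the set of $q$-polynomials vanishing on $V$ — equivalently, inducing a map with $V$ in its kernel — is the left ideal generated by a unique monic $q$-polynomial, the \emph{annihilator polynomial} of $V$, call it $A_V$, and $\qdeg(A_V)=d$. One constructs it, for instance, as $A_V(X)=\prod_{v\in V}^{\curvearrowright}(X - \text{something})$ or more cleanly by induction on $\dim V$: if $V = V'\oplus\Fq\alpha$ with $\alpha\notin V'$, then $A_V = (X^q - A_{V'}(\alpha)^{q-1}X)\circ A_{V'}$, and one checks $\qdeg$ adds up correctly and the result is monic. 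Uniqueness follows from left Euclidean division in $\qpoly$: if $A_V$ and $A_V'$ both work and are monic of the same $q$-degree, their difference has strictly smaller $q$-degree but still vanishes on a $d$-dimensional space, forcing it to be $0$.

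Now apply this with $V=\ker\varphi$, which has dimension $m-t$. This gives a unique monic $A_{\ker\varphi}$ of $q$-degree $m-t$ vanishing on $\ker\varphi$. But I want $q$-degree $\le t$, not $m-t$, and I want the image of $\Lambda$ inside $\ker\varphi$, not the kernel of $\Lambda$ to contain $\ker\varphi$ — so I should apply the annihilator construction not to $\ker\varphi$ but dualise. The cleanest route is via the adjoint: the condition $E\circ\Lambda\equiv 0$ is equivalent, after taking adjoints (recall $(P\circ Q)^\vee = Q^\vee\circ P^\vee$ in the quotient, and $\rk(E^\vee)=\rk(E)=t$), to $\Lambda^\vee\circ E^\vee \equiv 0$, i.e. $E^\vee$ vanishes on $\img(\Lambda^\vee)$... this still needs care. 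The more direct approach: $E\circ\Lambda\equiv 0$ says $\img(\text{map of }\Lambda)\subseteq\ker\varphi$. Equivalently, composing with the inclusion, one wants $\Lambda$ to factor as (map into $\ker\varphi$). The set of $q$-polynomials $\Lambda$ with $\img\subseteq W$ for a fixed subspace $W$ of dimension $m-t$ is a \emph{right} ideal, and I claim it is generated by a unique monic $q$-polynomial of $q$-degree exactly $m - (m-t) = t$ — this is the "image-restriction" dual of the annihilator polynomial, obtained by applying the annihilator construction to $W^\perp$ (with respect to the trace form) and then taking the adjoint, since adjunction swaps kernel and image orthogonally. Concretely: let $B = A_{W^\perp}$ where $W^\perp$ has dimension $t$, so $B$ is monic of $q$-degree $t$; set $\Lambda = B^\vee$ up to normalising the leading coefficient to make it monic. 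Then $\ker(\text{adj})$ containing $W^\perp$ translates to $\img(\Lambda)\subseteq (W^\perp)^\perp = W = \ker\varphi$, so $E\circ\Lambda\equiv 0$, and $\qdeg(\Lambda)=t$.

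The main obstacle — and the step I would spend the most care on — is getting the duality bookkeeping exactly right: tracking how $\qdeg$, monicity, kernels and images behave under the adjoint map on $\qpoly/(X^{q^m}-X)$, and confirming that "vanishing on a $d$-dimensional subspace" corresponds under adjunction to "image contained in a subspace of dimension $m-d$". Once that dictionary is pinned down, both existence (construct $\Lambda$ as the adjoint of the annihilator polynomial of $(\ker\varphi)^\perp$, which has dimension $t$) and uniqueness (two monic solutions of $q$-degree $\le t$ differ by a $q$-polynomial of smaller $q$-degree whose image still lies in the $(m-t)$-dimensional space $\ker\varphi$; but a nonzero $q$-polynomial of $q$-degree $s$ has image of dimension exactly $m-s > m-t$, a contradiction unless it is $0$) fall out cleanly. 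I would also double-check the edge case $t=0$ ($E\equiv 0$, $\Lambda = X$) and note $\Lambda$ is forced to have $q$-degree \emph{exactly} $t$ when $E\neq 0$.
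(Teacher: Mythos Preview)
The paper actually states this proposition without proof; the closest thing to a proof appears later, for Proposition~\ref{prop:co-interpolator} (existence of a $q$--polynomial with prescribed image), and that argument uses exactly the annihilator--then--adjoint route you propose. So your overall strategy matches the paper's.

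There is, however, a genuine gap in your degree bookkeeping. You write ``set $\Lambda = B^\vee$ \dots\ and $\qdeg(\Lambda)=t$'', but this is false in general: if $B=\sum_{i=0}^{t} b_i X^{q^i}$ then
\[
B^\vee \ \equiv\ b_0 X + \sum_{i=1}^{t} b_i^{\,q^{m-i}} X^{q^{m-i}} \pmod{X^{q^m}-X},
\]
whose $q$--degree is $m-1$ whenever $b_1\neq 0$, not $t$. The paper's proof of Proposition~\ref{prop:co-interpolator} supplies the missing step: first replace $B$ by $B_1 \eqdef X^{q^{m-t}}\circ B$, which has the same kernel (Frobenius is bijective on $\Fqm$) but is now supported on the monomials $X^{q^{m-t}},\dots,X^{q^{m}}$; taking the adjoint of $B_1$ then lands in $q$--degree $\le t$, and one normalises to make it monic. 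You correctly flagged the degree tracking under adjunction as ``the main obstacle'', and it is---but the specific fix (the Frobenius shift before taking the adjoint) is the one idea that does not fall out of routine duality bookkeeping, and your proposal does not contain it. A minor separate point: in your uniqueness argument, ``image of dimension exactly $m-s$'' should read ``at least $m-s$'' (a nonzero $q$--polynomial of $q$--degree $s<m$ has kernel of dimension at most $s$); the contradiction you want still follows from the inequality.
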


The goal is now to compute this right annihilator $\Lambda$. It satisfies the equation
\[ Y\circ \Lambda = C\circ \Lambda + E\circ \Lambda \equiv C\circ \Lambda\mod (X^{q^{m}}-X),
\] which yields a non linear system of $n$ equations whose unknowns are the
$k+t+1$ coefficients of $C$ and $\Lambda$.

\begin{equation}\label{eq:right_wb_nl_system} \left\lbrace
	\begin{array}{l} (Y\circ \Lambda)(g_{i}) = C\circ \Lambda(g_{i}) \\ \qdeg \Lambda \le t \\ \qdeg C \le k-1.
	\end{array} \right.
\end{equation}

\noindent In order to linearize the system, set $N=C\circ \Lambda$ and consider instead

\begin{equation}\label{eq:right_wb_sl_system} \left\lbrace
	\begin{array}{l} (Y\circ \Lambda)(g_{i}) = N(g_{i}) \\ \qdeg \Lambda \le t \\ \qdeg N \le k + t - 1,
	\end{array} \right.
\end{equation}
whose unknowns are the $k+2t+1$ coefficients of $N$ and $\Lambda$. The relationships
between those two systems are specified in the following proposition.

\begin{proposition}\
  \begin{itemize}[label=$\bullet$]
    \item Any solution $(\Lambda, C)$ of \eqref{eq:right_wb_nl_system} gives a
solution $(\Lambda, N=C\circ \Lambda)$ of \eqref{eq:right_wb_sl_system}.
    \item Assume that $E$ is of rank $t\le \lfloor \frac{n-k}{2}\rfloor$. If
$(\Lambda, N)$ is a nonzero solution of \eqref{eq:right_wb_sl_system} then
$N = C\circ \Lambda$ where $C = Y-E$ is the interpolating $q$--polynomial of the
codeword.
  \end{itemize}
\end{proposition}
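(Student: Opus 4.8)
The plan is to dispatch the two items separately: the first is a one-line substitution, while the second carries all the content and hinges on a rank estimate.

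\textbf{First item.} Given a solution $(\Lambda,C)$ of \eqref{eq:right_wb_nl_system}, I would set $N=C\circ\Lambda$ and check the three constraints of \eqref{eq:right_wb_sl_system} one by one: the bound $\qdeg\Lambda\le t$ is untouched, the degree bound $\qdeg N\le\qdeg C+\qdeg\Lambda\le(k-1)+t$ is immediate from the $q$-degree behaviour of composition, and substituting $N=C\circ\Lambda$ into the evaluation equations of \eqref{eq:right_wb_nl_system} turns $(Y\circ\Lambda)(g_i)=(C\circ\Lambda)(g_i)$ into $(Y\circ\Lambda)(g_i)=N(g_i)$.

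\textbf{Second item.} Let $(\Lambda,N)$ be a nonzero solution of \eqref{eq:right_wb_sl_system} and set $M=N-C\circ\Lambda$, a $q$-polynomial of $q$-degree at most $k+t-1<m$ (cancellation of leading terms can only lower the degree). Since $Y=C+E$ we have $Y\circ\Lambda=C\circ\Lambda+E\circ\Lambda$, so the equations $(Y\circ\Lambda)(g_i)=N(g_i)$ rewrite as $(E\circ\Lambda)(g_i)=M(g_i)$ for every $i$. Here I would use that $n=m$, so that the entries of $\bfg$ form an $\Fq$-basis of $\Fqm$, whence $E\circ\Lambda$ and $M$ induce the same $\Fq$-linear endomorphism of $\Fqm$, i.e. $E\circ\Lambda\equiv M\pmod{(X^{q^m}-X)}$. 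Then I would squeeze $\rk(M)$ from both sides: on one hand $\img M=\img(E\circ\Lambda)\subseteq\img E$, so $\rk(M)\le\rk(E)=t$; on the other hand, if $M\neq 0$ then $\qdeg M\le k+t-1$, so $\dim\ker M\le k+t-1$ and therefore $\rk(M)\ge m-(k+t-1)=n-k-t+1$. Comparing the two bounds gives $2t\ge n-k+1$, contradicting $t\le\lfloor\frac{n-k}{2}\rfloor$; hence $M=0$, that is $N=C\circ\Lambda$, with $C=Y-E$ the interpolating $q$-polynomial of $\bfc$ by construction.

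The step I expect to be the crux — and the only place where the standing assumption $n=m$ of this subsection is truly needed — is the promotion of the $n$ scalar identities $(E\circ\Lambda)(g_i)=M(g_i)$ to the polynomial congruence $E\circ\Lambda\equiv M$, because only then can $\rk(M)$ be controlled simultaneously from above (through $\img E$) and from below (through its $q$-degree, which is $<m$). Everything else is bookkeeping: the identity $(C+E)\circ\Lambda=C\circ\Lambda+E\circ\Lambda$, the arithmetic of $q$-degrees, the final inequality, and the harmless observation that a nonzero solution forces $\Lambda\neq 0$. Note that no use is made of the uniqueness part of Proposition~\ref{prop:right_annihilator}: that uniqueness guarantees a suitable annihilator sits among the solutions of \eqref{eq:right_wb_sl_system}, whereas the argument above shows that \emph{every} nonzero solution already produces the correct $N$.
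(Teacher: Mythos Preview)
The paper does not supply a proof of this proposition; it is stated and immediately used, with details implicitly deferred to \cite{BC21}. Your argument is correct and is the expected Berlekamp--Welch style proof: the first item is a direct substitution, and for the second you set $M=N-C\circ\Lambda$, use $n=m$ to promote the $n$ evaluation identities to the congruence $E\circ\Lambda\equiv M\pmod{X^{q^m}-X}$, and then squeeze $\rk(M)$ between $t$ (from $\img(E\circ\Lambda)\subseteq\img E$) and $m-(k+t-1)$ (from $\qdeg M\le k+t-1<m$), forcing $M=0$ under the hypothesis $2t\le n-k$. There is nothing to compare against in the present paper, but your proof is complete and would serve as the missing argument.
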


Therefore, it is possible to recover the codeword $C$ from any non zero solution
$(\Lambda, N)$ of \eqref{eq:right_wb_sl_system} by computing the right hand side
euclidean division of $N$ by $\Lambda$ which can be done efficiently.

\begin{remark}
  Actually, the previous system is only linear over $\Fq$, not over
  $\Fqm$. To address this issue, one can use the adjunction
  operation. Details can be found in \cite{BC21}.
\end{remark}

\subsection{$n<m$}\label{ss:n<m}

Assume now that $n$ is not maximal, and consider a received word
$\bfy = \bfc + \bfe$, where $\bfc = (C(g_1), \dots, C(g_n))$ for some
$q$--polynomial $C$ of $q$--degree $<k$ and $\bfe \in \Fqm^n$ has rank $t$ whose
value is discussed further.

As in the previous section, our first objective is to reformulate the
decoding problem in terms of $q$--polynomials instead of vectors. Here
lies the first issue. Indeed, since $\bfy$ has length $n < m$ there is
not a unique $q$--polynomial $Y$ in $\qpoly/(X^{q^m}-X)$ such that
$\bfy = (Y(g_1),\dots, Y(g_n))$. For this reason, when choosing such an
arbitrary interpolator $Y$ for $\bfy$, one can define $E \eqdef Y -C$ and get
a new $q$--polynomial formulation of the decoding problem as
\[
  Y = C + E,
\]
but here there is no longer any reason that $E$ would have rank $t$,
we only know that the vector $(E(g_1), \dots, E(g_n))$ has rank $t$.
In terms of linear operators, this means that the restriction
of $E$ to the span $V$ of $g_1, \dots, g_n$ over $\Fq$ has rank $t$.

To fix this issue we proceed as follows. First we choose $Y$ as the
interpolator of lowest degree by choosing the unique monic
interpolator of degree $< n$. Since $\qdeg (C) < k < n$, this entails
that $\qdeg(E) < n$. Next we will change the interpolating polynomials
$Y$ and $E$ in order $E$ to have rank $t$. This should be done without
knowing the error. We need a slight generalization of
Proposition~\ref{prop:right_annihilator} which we prove here for the
sake of completeness.

\begin{proposition}\label{prop:co-interpolator}
  There exists a $q$--polynomial $G$ of $q$--degree $\leq m-n$ whose
  image equals the $\Fq$--span of $g_1, \dots, g_n$.
\end{proposition}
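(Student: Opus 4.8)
Set $V\eqdef\span_{\Fq}\{g_1,\dots,g_n\}$, an $\Fq$--subspace of $\Fqm$ of dimension $n$. The plan is to obtain $G$ as the factor complementary to the \emph{annihilator polynomial} of $V$ in a factorisation of $X^{q^m}-X$ inside the composition ring $\qpoly$. First I would invoke the classical fact that the polynomial $A(X)\eqdef\prod_{v\in V}(X-v)$ is $\Fq$--linear, hence is a $q$--polynomial, with $\qdeg A=\dim_{\Fq}V=n$ and, seen as an endomorphism of $\Fqm$, with $\ker A=V$ (here one uses $V\subseteq\Fqm$).

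Next I would divide $X^{q^m}-X$ by $A$ with $A$ as the right factor --- possible since $\qpoly$ is left- and right-Euclidean --- to get $X^{q^m}-X=G\circ A+R$ with $\qdeg R<\qdeg A=n$. Evaluating this identity at an arbitrary $v\in V$, and using $v^{q^m}=v$ together with $A(v)=0$, yields $R(v)=0$; hence $V\subseteq\ker R$, so that $n\leq\dim_{\Fq}\ker R\leq\qdeg R<n$ unless $R=0$. Therefore $R=0$, that is $X^{q^m}-X=G\circ A$; and since the $q$--degree is additive under composition (so that $\qpoly$ is a domain), $\qdeg G=m-n$.

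The crucial step is then to show $X^{q^m}-X=A\circ G$ as well. This rests on $X^{q^m}-X$ being \emph{central} in $\qpoly$: for any $q$--polynomial $P$ one has $P\circ(X^{q^m}-X)=P(X^{q^m})-P(X)=P(X)^{q^m}-P(X)=(X^{q^m}-X)\circ P$, the middle equality holding because the coefficients of $P$ lie in $\Fqm$. In particular $X^{q^m}-X$ commutes with $A$, i.e. $G\circ A\circ A=A\circ G\circ A$; cancelling $A$ on the right --- licit as $\qpoly$ has no zero divisors --- gives $G\circ A=A\circ G$, hence $X^{q^m}-X=A\circ G$. Reading this identity modulo the two--sided ideal $(X^{q^m}-X)$, that is inside $\textrm{Hom}_{\Fq}(\Fqm,\Fqm)$, gives $A\circ G=0$, so $\img G\subseteq\ker A=V$; as moreover $\dim_{\Fq}\img G=m-\dim_{\Fq}\ker G\geq m-\qdeg G=n=\dim_{\Fq}V$, we conclude $\img G=V$. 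Thus $G$ has $q$--degree $m-n$ and image $V$, which proves the proposition.

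I expect the only genuinely delicate point to be this left/right bookkeeping: the evaluation argument naturally exhibits $A$ as a \emph{right} factor of $X^{q^m}-X$, whereas controlling the image of $G$ requires $A$ as a \emph{left} factor, and it is exactly the centrality of $X^{q^m}-X$ that bridges the gap.
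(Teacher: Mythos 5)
Your proof is correct, but it follows a genuinely different route from the paper's. The paper works by duality: it interpolates a $q$--polynomial $G_0$ of $q$--degree $\leq m-n$ whose kernel is $V^\perp$ (for the trace form), left--composes with $X^{q^n}$ to push its monomial support into $X^{q^n},\dots,X^{q^m}$, and then takes $G \eqdef G_1^{\vee}$, concluding via the adjunction identity $\img(G_1^\vee)=\ker(G_1)^\perp=V$ --- a three--line argument once the adjoint machinery of \S~1.1 is available. You instead stay entirely inside the ring $\qpoly$: you take the subspace (annihilator) polynomial $A$ of $V$, show by right Euclidean division and a root--counting argument that $A$ is a right factor of $X^{q^m}-X$, and then use the centrality of $X^{q^m}-X$ (together with the absence of zero divisors) to convert this into the left factorisation $X^{q^m}-X=A\circ G$, from which $\img G\subseteq\ker A=V$ and a dimension count finish the proof. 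Each step checks out: the centrality computation $P(X^{q^m})=P(X)^{q^m}$ is valid precisely because the coefficients lie in $\Fqm$, the cancellation of $A$ on the right is licit since $q$--degrees add under composition, and the bound $\dim_{\Fq}\ker G\leq \qdeg G$ gives $\dim_{\Fq}\img G\geq n$. What your approach buys is independence from the trace form and the adjoint, plus a stronger and more explicit conclusion: $G$ has exact $q$--degree $m-n$ and satisfies the two--sided factorisation $G\circ A=A\circ G=X^{q^m}-X$. What the paper's approach buys is brevity and reuse of a tool (the adjoint) that the right--hand side decoder needs anyway; it also sidesteps your ``left/right bookkeeping'' entirely, since the image of an adjoint is controlled directly by the kernel of the original map.
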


\begin{proof}
  Let $V$ denote the $\Fq$--span of $g_1, \dots, g_n$. By
  interpolation, it is well-known that there exists $G_0$ of
  $q$--degree $\leq m-n$ whose kernel equals the $(m-n)$--dimensional
  space $V^\perp$ for the inner product
  $(x,y) \mapsto \text{Tr}_{\Fqm/\Fq}(xy)$. Then, the $q$--polynomial
  $G_1 \eqdef X^{q^{n}}\circ G_0$ has the same kernel $V$ and is in
  the span of the $q$--monomials $X^{q^{n}},\dots, X^{q^{m}}$. Then,
  let $G \eqdef G_1^{\vee}$ be the adjoint as introduced in
  \S~\ref{ss:Gab+qpoly}. It has $q$--degree $\leq m-n$ and, by
  adjunction properties, satisfies
  $\text{Im}(G) = \text{Im}(G_1^\vee) = \ker(G_1)^\perp = V$.
\end{proof}

Let $G$ be the $q$--polynomial of degree $\leq m-n$ given by
Proposition~\ref{prop:co-interpolator} and consider $Y \circ G$ instead of $Y$,
then we get a new problem which is
\begin{equation}\label{eq:new_decoding_pb}
  Y \circ G = C \circ G + E \circ G.
\end{equation}
First, since $\bfy$ and $\bfg$ are known, the $q$--polynomials $Y, G$
are computable using interpolation.  Then, the $q$--polynomial
$C \circ G$ has $q$--degree $< k + m - n$ and hence corresponds to a
codeword of a Gabidulin code of dimension $k+m-n$.  Finally,
$E \circ G$ has rank $t$. Indeed, as mentioned earlier, $t$ is the
rank of the restriction of $E$ to the span of $g_1, \dots, g_n$, which
is precisely the image of $G$. Thus, the reformulated problem
(\ref{eq:new_decoding_pb}) can be regarded as correcting a rank $t$
error in a Gabidulin code of length $m$ and dimension $k + m - n$.
Using our right-hand decoding algorithm it is hence possible to
correct an amount of errors up to
\[
  t = \frac{m-(k+m-n)}{2} = \frac{n-k}{2}\cdot
\]

\begin{remark}
  The previous results may be interpreted in terms of decoding a
  length $m$ Gabidulin code which was column--punctured on the right
  at $\delta = m-n$ positions (see \cite[\S~2.3]{S19a} for a
  definition of column--puncturing).  Similarly, this can be
  unterstood in terms of decoding a length $m$ Gabidulin code under
  $\delta$ rank erasures and $t$ rank errors. In this situation we recover
  the usual fact that $2t + \delta \leq n-k$.
\end{remark}

 \section{Decoding interleaved Gabidulin codes}

\subsection{Interleaving}

Interleaving a code $\code{C}$ consists in considering several
codewords of $\code{C}$ {\em at the same time}, corrupted by errors
having the same support. In the Hamming metric, interleaved
Reed-Solomon codes have been extensively studied and come with
efficient {\em probabilistic} decoders allowing to correct {\em
  uniquely} almost all error patterns slightly beyond the error
capability of the code. See \cite{BKY03} for further reference. In the
rank metric, interleaved Gabidulin codes have been introduced by
Loidreau and Overbeck in \cite{LO06}. Let $\bfg$ be an evaluation
vector, and let $u \in \N$. The $u$-interleaved Gabidulin code of
evaluation vector $\bfg$ and dimension $k$ is

\[
  \IGab{u}{k}{\bfg} \eqdef \left\lbrace
    \begin{pmatrix}
      \mathbf{c^{(1)}} \\  \vdots \\ \mathbf{c^{(u)}}
    \end{pmatrix} \mid \mathbf{c^{(i)}}\in Gab_{k}(\mathbf{g})\right\rbrace.
\]

\begin{remark}
For $u=1$, we recover usual Gabidulin codes.
\end{remark}

\noindent Each codeword $\bfC \in \IGab{u}{k}{\bfg}$ is the evaluation of a {\em column} vector of $q$-polynomials of bounded $q$-degrees on $\bfg$:

\[
  \bfC = (\bfGamma(\bfg_{1}), \dots, \bfGamma(\bfg_{n})), \quad \bfGamma = \begin{pmatrix}C^{(1)} \\ \vdots\\ C^{(u)}\end{pmatrix} \text{ where } C^{(i)}\in\qpoly_{<k}.
\]

\noindent Using the inverse extension map, each $\bfGamma(g_{i})$ can
be interpreted as an element of $\Fqmu$, and $\IGab{u}{k}{\bfg}$ is
then a code of length $n$ and dimension $k$ over $\Fqmu$. Moreover,
they are known to be MRD (see \cite[Lemma 2.17]{W13}), so the error
correction capability of $\IGab{u}{k}{\bfg}$ is $\frac{n-k}{2}$.
However, their specific structure allows to design efficient
algorithms being able to {\em uniquely} decode $\IGab{u}{k}{\bfg}$ up
to $\frac{u}{u+1}(n-k) > \frac{n-k}{2}$ for $u>1$, with very high
probability \cite{LO06,WZ14}.

In this section we show how to use the right-hand side variant of the
Berlekamp-Welch algorithm introduced before to decode $u$-interleaved
Gabidulin codes, up to $\frac{u}{u+1}(n-k)$. Since this is beyond the
error capability of the code, this algorithm might fail but the
probability of failure is very low.

\subsection{Error model}

Similarly to the Hamming metric, we consider a channel model where
errors happen {\em in burst}. In this model, the transmitted codeword
is a matrix $\bfC \in \Mspace{u\times n}{\Fqm}$ representing $u$ codewords
of $\Gab{k}{\bfg}$ {\em in parallel}, and the error is a matrix
$\bfE\in\Mspace{u \times n}{\Fqm}$ of $\Fq$-rank $t$, \ie such that the
matrix of $\Mspace{um \times n}{\Fq}$ obtained from $\bfE$ by extending
every {\em row} in a basis of $\Fqm/\Fq$ is of rank $t$. The receiver
then gets a word $\bfY = \bfC + \bfE$, and the goal is to recover
$\bfC$.

\noindent In the Hamming metric, the receiver gets $u$ noisy
$\bfy^{(i)} = \bfc^{(i)}+\bfe^{(i)}$ such that $\bfc^{(i)}$ are
codewords of some code $\code{C}$ (\eg a Reed-Solomon code) and {\em
  all} the $\bfe^{(i)}$ have the {\em same} support of cardinality
$t$.

\noindent In the current setting, each row of $\bfY$ is of the form
$\bfy^{(i)} = \bfc^{(i)} + \bfe^{(i)}$ where
$\bfc^{(i)}\in\Gab{k}{\bfg}$. The following proposition whose proof is
straightforward justifies the term {\em burst rank-errors}.

\begin{proposition}\label{prop:common_row_support}
  The row support of each $\bfe^{(i)}$ is contained in the $\Fq$-row
  space of $E$ which is of dimension $t$.
\end{proposition}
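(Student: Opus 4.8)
The plan is to unpack the definitions of the error model and of the row support, and observe that the claim is essentially immediate. Let $\bfE \in \Mspace{u \times n}{\Fqm}$ be the burst error matrix, of $\Fq$-rank $t$, meaning that the matrix $\bfE_{\Fq} \in \Mspace{um \times n}{\Fq}$ obtained by expanding every row of $\bfE$ in a fixed basis $\BB$ of $\Fqm/\Fq$ has rank $t$. Write $\bfe^{(i)}$ for the $i$-th row of $\bfE$. The idea is that the $\Fq$-row expansion of $\bfe^{(i)}$, namely $\ext{\bfe^{(i)}}{\BB}$, is a {\em submatrix} of $\bfE_{\Fq}$, consisting of $m$ consecutive rows, so its row space is contained in the row space of $\bfE_{\Fq}$. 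By definition of the row support, this says exactly $\Rsupp(\bfe^{(i)}) \subseteq \rowsp(\bfE_{\Fq})$, and the latter space has dimension $t$ since $\bfE_{\Fq}$ has rank $t$.

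First I would make precise the identification between $\Rsupp(\bfe^{(i)})$ and the row space of the corresponding block of $\bfE_{\Fq}$: extending $\bfe^{(i)} \in \Fqm^n$ with respect to $\BB$ gives exactly the $i$-th block of $m$ rows of $\bfE_{\Fq}$, and $\Rsupp(\bfe^{(i)})$ was defined in Section~1 as the row space of $\ext{\bfe^{(i)}}{\BB}$. Second, since deleting rows from a matrix can only shrink the row space, $\rowsp(\ext{\bfe^{(i)}}{\BB}) \subseteq \rowsp(\bfE_{\Fq})$. Third, $\rowsp(\bfE_{\Fq})$ is an $\Fq$-subspace of $\Fq^n$ of dimension $\operatorname{rank}(\bfE_{\Fq}) = t$ by hypothesis. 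Chaining these inclusions yields the statement, with $E$ in the statement understood as (a $q$-polynomial representation of) the error, whose $\Fq$-row space is precisely $\rowsp(\bfE_{\Fq})$ of dimension $t$.

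There is no real obstacle here; the only thing to be careful about is bookkeeping with the block structure, i.e.\ checking that the row expansion of an individual $\bfe^{(i)}$ sits inside the global expansion $\bfE_{\Fq}$ as a contiguous block of rows, rather than scattered, and that this is independent of the chosen basis $\BB$ (which follows from the basis-independence of $\Rsupp$ noted after its definition). This is why the authors call the proof ``straightforward'': it is a direct translation of the burst error model into the language of row supports, and the bound on the dimension is just the definition of the $\Fq$-rank of $\bfE$.
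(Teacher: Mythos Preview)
Your argument is correct and is precisely the ``straightforward'' unpacking of definitions that the paper alludes to without writing out: the expansion $\ext{\bfe^{(i)}}{\BB}$ is a block of rows of the global expansion $\bfE_{\Fq}$, so its row space sits inside $\rowsp(\bfE_{\Fq})$, which has dimension $t$ by hypothesis. One small clarification: in the statement, $E$ simply denotes the error matrix $\bfE$ (the $q$--polynomial interpretation only enters later, in Lemma~\ref{lem:common_locator_polynomial}), so your parenthetical about a $q$--polynomial representation is unnecessary here.
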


\begin{remark}
  In this article the error model consists in considering error
  vectors $\bfe^{(i)}$ sharing the same {\em row support}. It seems to
  be the most natural error model when considering the code regarded
  as a code over $\Fqmu$, and it is the one used in most references.
  On the other hand, one may consider another error model where the
  errors share a common {\em column support}. In the latter case, the
  usual left--hand side decoder can be used, see for instance
  \cite{LO06}.
\end{remark}

\subsection{Right-hand side decoding of interleaved Gabidulin codes}\label{subs:right_decoding_interleaved}

Let $\bfY = \bfC + \bfE \in \Mspace{u \times n}{\Fqm}$ be a received word.
By linear interpolation of each row of $\bfY$, there exist $u$ triple
of $q$-polynomials $(Y_{i}, C_{i}, E_{i})$ such that
\[
  Y_{i} = C_{i} + E_{i},
\]
and $\qdeg(C_{i})<k$. Since the errors have the same support of
dimension $t$, there exists a $q$-polynomial $\Lambda$ with
$\qdeg(\Lambda) \le t$ that locates {\em all} the errors. More
specifically, Proposition \ref{prop:common_row_support} induces the
following lemma:

\begin{lemma}\label{lem:common_locator_polynomial}
  Denoting by $E_{i}$ the interpolator $q$-polynomial of $\bfe^{(i)}$,
  there exists $\Lambda \in \qpoly_{\le t}$ such that
  \[
    E_{i}\circ \Lambda = 0 \mod (X^{q^{m}}-X),\quad \forall i\in\{1,\ldots,u\}.
  \]
\end{lemma}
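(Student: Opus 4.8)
The plan is to produce $\Lambda$ as a common right annihilator of all the $E_i$, and the key observation is that ``right annihilator of $E_i$ modulo $(X^{q^m}-X)$'' is really a statement about the image of the linear operator induced by $E_i$ on $\Fqm$. First I would recall Proposition~\ref{prop:right_annihilator}: for a single $q$--polynomial of rank $t$ there is a unique monic right annihilator of $q$--degree $\le t$, and that $\Lambda$ is characterised by $\img(\Lambda) = \ker(E_i)$ — equivalently, $\Lambda$ vanishes exactly on a complement and its image is the kernel of $E_i$. So the real content of the lemma is that the kernels of the $E_i$ share a common subspace of codimension $t$ on which we can build a single $\Lambda$.

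Next I would translate Proposition~\ref{prop:common_row_support} into the language of operators. The error matrix $\bfE \in \Mspace{u \times n}{\Fqm}$ has $\Fq$--row space of dimension $t$ inside $\Fq^n$; extending along $\bfg$, this says the vectors $\bfe^{(i)}$ all have row support inside a common $t$--dimensional space $W \subseteq \Fq^n$. Pulling this back through the evaluation map: writing $\bfe^{(i)} = (E_i(g_1), \dots, E_i(g_n))$, the condition ``$\bfe^{(i)}$ has row support in $W$'' means that every $\Fq$--linear relation $\sum_j \lambda_j \mathbf{row}_j$ with coefficients annihilating $W$ annihilates the vector, i.e.\ there is a common subspace $U \subseteq \Fqm$ of $\Fq$--dimension $m - t$ with $U \subseteq \ker(E_i)$ for every $i$. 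Concretely, $U$ is the span over $\Fq$ of those $\Fq$--combinations $\sum_j c_j g_j$ for $(c_j)$ in the orthogonal of $W$; the fact that the $\bfe^{(i)}$ all live over $W$ forces each $E_i$ to kill all of $U$.

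Then I would build $\Lambda$ from $U$. Since $U$ has $\Fq$--dimension $m-t$, by the standard interpolation fact for $q$--polynomials (the same one invoked in the proof of Proposition~\ref{prop:co-interpolator}) there is a unique monic $q$--polynomial $\Lambda$ of $q$--degree $\le m-(m-t)=t$ with $\ker(\Lambda) = U$; equivalently $\img(\Lambda)$ has dimension $t$ and $\Lambda$ vanishes on $U$. Now for each $i$, the composition $E_i \circ \Lambda$ acts as: first apply $\Lambda$ (landing somewhere in $\Fqm$), then apply $E_i$. But we need $E_i \circ \Lambda = 0$, which is the statement $\img(\Lambda) \subseteq \ker(E_i)$ — and that is \emph{not} automatic from $U \subseteq \ker(E_i)$. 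The cleaner route, and the one I would actually take, is to set $\Lambda$ to be the $q$--polynomial whose \emph{image} is $U$ (again of $q$--degree $\le t$ since $U$ has codimension $t$); then $\img(\Lambda) = U \subseteq \ker(E_i)$ gives $E_i \circ \Lambda = 0 \bmod (X^{q^m}-X)$ directly for all $i$ simultaneously. Using adjunction as in \S~\ref{ss:Gab+qpoly} one can pass freely between the ``prescribed kernel'' and ``prescribed image'' versions, so the $q$--degree bound $\le t$ is preserved.

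The main obstacle I anticipate is the bookkeeping that identifies the common kernel subspace $U$ of $\Fq$--dimension $m-t$ and checks that its codimension really is $t$ (not merely $\le$ something larger) — i.e.\ making precise that the common row support of the $\bfe^{(i)}$ of dimension $\le t$ yields a common operator-kernel of dimension \emph{exactly} $\ge m-t$, which is what controls $\qdeg(\Lambda) \le t$. This is essentially a transpose/duality argument between the row-support picture in $\Fq^n$ and the kernel picture in $\Fqm$, and once set up it is routine; the paper already signals this by calling the proof of Proposition~\ref{prop:common_row_support} straightforward. I would therefore present the lemma's proof as: (i) invoke Proposition~\ref{prop:common_row_support} to get the common $t$--dimensional row support $W$; (ii) dualise to obtain a common $(m-t)$--dimensional subspace $U$ annihilated by every $E_i$; (iii) take $\Lambda$ to be a $q$--polynomial of $q$--degree $\le t$ with $\img(\Lambda) = U$, so that $E_i \circ \Lambda \equiv 0$ for all $i$; and (iv) note $\Lambda$ can be taken monic, giving the claimed $\Lambda \in \qpoly_{\le t}$.
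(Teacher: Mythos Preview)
Your proposal is correct and is precisely the argument the paper leaves implicit: the paper offers no separate proof but simply says the lemma is ``induced by'' Proposition~\ref{prop:common_row_support}, and your route (common row support $W \leadsto$ common kernel subspace $U=\{\sum_j c_j g_j : (c_j)\in W^\perp\}\subseteq \bigcap_i \ker(E_i)$ of dimension $m-t$ $\leadsto$ $\Lambda$ with $\img(\Lambda)=U$, hence $\qdeg(\Lambda)\le t$) is exactly how one makes that induction precise. Your mid-argument switch from prescribing $\ker(\Lambda)$ to prescribing $\img(\Lambda)$ is the right correction, and the final summary (i)--(iv) is clean.
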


\noindent Lemma \ref{lem:common_locator_polynomial} yields the
following non-linear system of $u\times n$ equations

\begin{equation}\label{eq:interleaved_right_wb_nl_system} \left\lbrace
    \begin{array}{l}
      (Y_{i}\circ \Lambda)(g_{j}) = (C_{i}\circ \Lambda)(g_{j}) \\
      \qdeg \Lambda \le t \\
      \qdeg C_{i} \le k-1, \quad \text{ for  } i\in \{1,\dots, u\}.
	\end{array} \right.
\end{equation}
which can be linearized into the following system, setting
$N_{i} \eqdef C_{i}\circ \Lambda$:

\begin{equation}\label{eq:interleaved_right_wb_system} \left\lbrace
    \begin{array}{l}
      (Y_{i}\circ \Lambda)(g_{j}) = N_{i}(g_{j}) \\
      \qdeg \Lambda \le t \\
      \qdeg N_{i} \le k+t-1, \quad \text{ for  } i\in \{1,\dots, u\}.
	\end{array} \right.
\end{equation}
This system has $u\times n$ equations, and $t+1+u(k+t)$ unknowns, and
therefore one can expect to retrieve $(\Lambda, N_{1},\ldots, N_{u})$
whenever $t\le \frac{u}{u+1}(n-k)$. Since
$N_{i} = C_{i}\circ \Lambda$, the codewords $C_{1},\dots, C_{u}$ can
then be recovered by computing euclidean division on the right.

\begin{remark}
  The decoding algorithm mentioned in \cite{LO06,WZ14} can be
  re-interpreted in terms of the aforementioned decoder. The present
  section permits in particular to shed light on the fact that
  previous algorithms are actually very comparable to Loidreau's
  original algorithm when acting on the right instead of acting on the
  left.
\end{remark}
\subsection{Application to cryptography: \LIGA{} encryption scheme}

Let $\Fqm, \Fqmu$ be two algebraic extensions of the finite field
$\Fq$. In \cite{FL05}, Faure and Loidreau introduced a rank metric
encryption scheme with small key size. The originality of the
cryptosystem was to base the security on the hardness of decoding a
(public) Gabidulin code beyond the unique decoding radius. Indeed, the
public key was of the form $\bfk_{pub} = \bfx\bfG + \bfz$ where $\bfG$
is a generator matrix of a public $[n, k]$ Gabidulin code over $\Fqm$
and $\bfx\in\Fqmu^{k}$, $\bfz\in\Fqmu^{n}$ together with
$t\eqdef \rk(z) >  \frac{n-k}{2}$ form the secret key.

However, it was shown in \cite{GOT18} that an attacker could easily
compute $u$ noisy codewords of the Gabidulin code generated by $\bfG$
using the $\Fqm$--linearity of the trace map $\text{Tr}_{\Fqmu/\Fqm}$,
and then recover the secret providing that $t \le \frac{u}{u+1}(n-k)$
(which was always the case to resist other attacks). This really
amounts to decoding the public key with a decoder of $u$--interleaved
Gabidulin codes. In order to repair the scheme, the authors of \LIGA{} proposed instead
to base the security on the hardness of decoding $u$--interleaved
Gabidulin codes. Indeed, they proved that by reducing the rank of
$\bfz$ over $\Fqm$ (while keeping its rank weight $t$ over $\Fq$
higher than the unique decoding radius), it was no longer possible to
recover the secret key. More precisely, denoting by
$\zeta \eqdef \rk[q^{m}](\bfz)$ this rank, they proved by a careful
analysis of known interleaved decoders that a condition for making the
decoder to fail was $\zeta < \frac{t}{n-k-t}$. In particular, in \LIGA{}
they proposed to set $\zeta = 2$.

Using our decoder, we propose a new interpretation of this condition.
Indeed, let $\bfY = \bfC + \bfE \in \Mspace{u\times n}{\Fqm}$ be a
noisy codeword of an $u$-interleaved Gabidulin code. The results of
Section \ref{subs:right_decoding_interleaved} can be strengthen as
follows: If some rows of $\bfE$ share a linear dependency, then the
equations in system \eqref{eq:interleaved_right_wb_system} are no
longer independent. In particular, if $\zeta \le u$ denotes the rank
of $\bfE$ over $\Fqm$, one can refine the reasoning and deduce an
equivalent linear system with $\zeta \times n$ independent equations
for $t+1+\zeta(k+t)$ unkowns.
Therefore, when $t>\frac{\zeta}{\zeta+1}(n-k)$, there are more
unknowns than equations and the decoder fails. This inequality is
exactly the condition $\zeta < \frac{t}{n-k-t}$ from \LIGA{}.

 \section*{Conclusion}
We presented a full version of a right-hand side decoding algorithm
for Gabidulin codes. This algorithm is close to a verbatim translation
of its well--known left--hand counterpart. However, compared to its
left--hand counterpart, it was unclear how to apply it to non
full length Gabidulin codes. This issue has been addressed in the
present article. Moreover, we claim that this algorithm is of
interest for various applications. First, it provides a very natural
approach for the decoding of interleaved Gabidulin codes. It is
actually very comparable to the algorithm proposed by Loidreau and
Overbeck \cite{LO06} but the strong connection with a Berlekamp--Welch
like decoder was not that clear in the aforementioned
reference. Second, this right-hand side decoder already appeared to
provide an interesting tool for cryptanalytic applications.

\bibliographystyle{splncs04}

\end{document}